\documentclass[11pt]{article}

\usepackage{graphicx}
\usepackage{latexsym}
\usepackage{enumerate}
\usepackage{amssymb}
\usepackage{ifpdf}
\usepackage{amsmath}
\usepackage{bbm}
\PassOptionsToPackage{boxed,section}{algorithm}
\usepackage{algorithm}
\usepackage{algorithmic}
\usepackage{times}
\usepackage{tikz}
\usetikzlibrary{arrows}
\usetikzlibrary{shapes}
\usetikzlibrary{decorations.markings}
\usepackage{verbatim}
\usepackage{setspace}

\voffset=-.5in
\setlength{\textheight}{8.75in}
\setlength{\textwidth}{6.8in}
\setlength{\topmargin}{0.65in}
\setlength{\headheight}{0.0in}
\setlength{\headsep}{0.0in}
\setlength{\oddsidemargin}{-.19in}
\setlength{\parindent}{1pc}

\usepackage{tikz}
\usetikzlibrary{arrows}
\usetikzlibrary{shapes}

\newcommand{\FUNCTION}[0]{{\bf Function}}
\newtheorem{theorem}{\bf Theorem}[section]
\newenvironment{proof}{ {\bf Proof.}} {$\Box$}
\newcommand*\samethanks[1][\value{footnote}]{\footnotemark[#1]}

\title{On the Shoshan-Zwick Algorithm for the All-Pairs Shortest
Path Problem}

\author{
Pavlos Eirinakis \thanks{This research has been co-financed by the European Union (European Social Fund - ESF) and Greek 
national funds through the Operational Program "Education and Lifelong Learning" of the National Strategic 
Reference Framework (NSRF) - Research Funding Program: Thales. Investing in knowledge society through the 
European Social Fund. (MIS: 380232)}\\
Athens University of Economics and Business,\\
Athens, Greece\\
\mbox{\{peir@aueb.gr\}}
\and
Matthew Williamson \thanks{This research was supported in part by the National Science Foundation and Air Force of Scientific 
Research through Awards CNS-0849735, CCF-1305054, and FA9550-12-1-0199.}\\
West Virginia University Institute of Technology\\
Montgomery, WV, USA\\
\mbox{\{matthew.williamson@mail.wvu.edu\}}
\and
 K. Subramani \samethanks\\
West Virginia University, \\
Morgantown, WV, USA \\
\mbox{\{ksmani@csee.wvu.edu\}}
}

\begin{document}
\date{}
\maketitle
\thispagestyle{myheadings}
\bibliographystyle{plain}

\onehalfspacing

\begin{abstract}
The Shoshan-Zwick algorithm solves the 
 all pairs shortest paths problem in undirected graphs with integer edge costs in 
 the range $\{1, 2, \dots, M\}$.  It runs in $\tilde{O}(M\cdot n^{\omega})$ time, 
 where $n$ is the number of vertices, $M$ is the largest integer edge cost, 
 and $\omega < 2.3727$ is the exponent of matrix multiplication.  
It is the fastest known algorithm  for this problem.
This paper points out the erroneous behavior of the Shoshan-Zwick algorithm and revises the algorithm to resolve 
the issues that cause this behavior.
Moreover, it discusses implementation aspects of the Shoshan-Zwick algorithm using currently-existing sub-cubic matrix multiplication algorithms.
\end{abstract}

\section{Introduction}
\label{sec:intro}

   The Shoshan-Zwick algorithm \cite{Shosh1} 
  solves the all-pairs shortest paths (APSP) problem in undirected graphs, where the edge 
  costs are integers in the range $\{1, 2, \dots, M\}$.  This is accomplished by computing $O(\log(M\cdot n))$ 
  distance products of $n \times n$ matrices with elements in the range $\{1, 2, \dots, M\}$.  The 
  algorithm runs in $\tilde{O}(M\cdot n^{\omega})$ time, where $\omega < 2.3727$ is the exponent 
  for the fastest known matrix multiplication algorithm \cite{Williams12}.  
  
    The APSP problem is a fundamental problem in algorithmic graph theory.  
  Consider a graph with $n$ nodes and $m$ edges.  
  For directed or undirected graphs with real edge costs, 
  we can use known methods that run in $O(m\cdot n + n^{2}\cdot \log n)$ time 
  \cite{dijkstra59,fredman87,johnson77} and $O(n^3)$ time \cite{floyd1}.  Sub-cubic APSP 
  algorithms have been obtained that run in $O(n^{3}\cdot ((\log \log n)/\log n)^{1/2})$ 
  time \cite{Fredman76}, $O(n^{3}\cdot \sqrt{\log \log n/\log n})$ time \cite{Takaoka92}, 
  $O(n^{3}/\sqrt{\log n})$ time \cite{Dobo90}, $O(n^{3}\cdot (\log \log n/\log n)^{5/7})$ 
  time \cite{Han04}, $O(n^{3}\cdot (\log \log n)^{2}$ $\log n)$ time \cite{Takaoka04}, 
  $O(n^{3}\cdot \log \log n/$ $\log n)$ time \cite{Takaoka05}, 
  $O(n^{3}\cdot \sqrt{\log \log n}/$ $\log n)$ time \cite{Zwick06}, $O(n^{3}/\log n)$ time 
  \cite{Chan08}, and $O(n^{3}\cdot \log \log n/\log^{2} n)$ time \cite{HanTakaoka12}.  
  For undirected graphs with integer edge costs, the problem can be 
  solved in $O(m\cdot n)$ time \cite{Thorup97,Thorup98}.  Fast matrix multiplication algorithms 
  can also be used for solving the APSP problem in dense graphs with small integer edge costs.  
  \cite{GM97,Raim1} provide algorithms that run in $\tilde{O}(n^{\omega})$ time in unweighted, 
  undirected graphs, where $\omega < 2.3727$ is the exponent for matrix multiplication \cite{Williams12}.  
  \cite{Chan12} improves this result with an algorithm that runs in $o(m\cdot n)$ time.
  
  In this paper, we revise the Shoshan-Zwick algorithm to resolve some issues that cause the algorithm to 
  behave erroneously. This behavior was identified when, in the process of implementing the 
  algorithm as part of a more elaborate procedure 
  (i.e., identifying negative cost cycles in undirected graphs),
  we discovered that the algorithm is not producing correct results.  Since the Shoshan-Zwick algorithm is incorrect in its current version, any result that uses this algorithm as a subroutine is also incorrect.  For instance, the results provided by Alon and Yuster \cite{AY07}, Cygan et. al \cite{CGS12}, W. Liu el al \cite{LWJLW12},  and Yuster \cite{Yuster11} all depend on the Shoshan-Zwick algorithm.  Thus, their results are currently incorrect.  By applying our revision to the algorithm, the issues with the above results are resolved.  We also discuss issues concerning the implementation of the Shoshan-Zwick algorithm using known sub-cubic matrix multiplication algorithms.

  The principal contributions of this paper are as follows:
  \begin{enumerate}[(i)]
  \item A counter-example that shows that the Shoshan-Zwick algorithm is incorrect  in its current form.
  \item A detailed explanation of where and why the algorithm fails.
  \item A modified version of the Shoshan-Zwick algorithm that corrects the problems with the previous version.  The 
corrections do not affect the time complexity of the algorithm.
  \item A discussion concerning implementing the matrix multiplication subroutine that is used in the algorithm.

 \end{enumerate}
 
   The rest of the paper is organized as follows.  We describe the Shoshan-Zwick algorithm in 
   Section~\ref{alg}.  Section \ref{counter} establishes
   that the published version of the algorithm is incorrect by providing a
   simple counter-example.  
   The origins of this erroneous behavior is identified in Section \ref{error}. 
   In Section \ref{corr}, we present a revised
   version of the algorithm and provide a formal proof of
   correctness.   In Section \ref{eff}, we discuss the efficacy of the Shoshan-Zwick
   algorithm.  
Section \ref{conc} concludes the paper by summarizing our contributions and discussing avenues for future
  research. 
   
   \section{The Shoshan-Zwick Algorithm}
   \label{alg}
   
   In this section, we review the Shoshan-Zwick algorithm for the APSP problem in undirected 
   graphs with integer edge costs.  Consider a graph $G = (V, E)$, where $V = \{0, 1, 2, \ldots, n\}$ is the set of 
   nodes, and $E$ is the set of edges.  The graph is represented as an $n \times n$ matrix 
   ${\bf D}$, where $d_{ij}$ is the cost of edge $(i, j)$ if $(i,j) \in E$, $d_{ii} = 0$ for $1 \leq i\leq n$, and 
   $d_{ij} = +\infty$ otherwise. 
   Note that, without loss of generality, the edge costs are taken from the range $\{1, 2, \dots, M\}$, where 
   $M = 2^{m}$ for some $m \geq 1$.
   
   The algorithm computes a logarithmic number of distance products in order to determine 
   the shortest paths.  Let ${\bf A}$ and ${\bf B}$ be two $n \times n$ 
   matrices. Their distance product ${\bf A} \star {\bf B}$ is an $n \times n$ matrix such that:
   
   \[({\bf A} \star {\bf B})_{ij} = \min_{k=1}^{n}\{a_{ik} + b_{kj}\}, 1 \leq i,j \leq n.\]
   
   \noindent The distance product of two $n \times n$ matrices with elements within 
   the range $\{1, 2, \dots,\allowbreak M, +\infty\}$ can be computed in $\tilde{O}(M\cdot n^{\omega})$ 
   time \cite{AGM97}.
   The distance product of two matrices whose finite elements are taken from $\{1, 2, \dots, M\}$ 
   is a matrix whose finite elements are taken from $\{1, 2 \dots, 2\cdot M\}$.  
   However, the Shoshan-Zwick algorithm is based on not allowing the range 
   of elements in the matrices it uses to increase. Hence, if ${\bf A}$ 
   is an $n \times n$ matrix, and $a, b$ are two numbers such that $a\leq b$,
   the algorithm utilizes two operations, namely $clip({\bf A}, a, b)$ and 
   $chop({\bf A}, a, b)$, that produce  corresponding $n \times n$ matrices such that  
    \[(clip({\bf A}, a, b))_{ij} = \left\{\begin{array}{ll}
 a & \mbox{ if } a_{ij} < a\\     
 a_{ij} & \mbox{ if } a \leq a_{ij} \leq b\\
 +\infty & \mbox{ if } a_{ij} > b\\
\end{array}\right.\]

 \[(chop({\bf A}, a, b))_{ij} = \left\{\begin{array}{ll}
 a_{ij} & \mbox{ if } a \leq a_{ij} \leq b\\
 +\infty & \mbox{ otherwise}\\
\end{array}\right..\]

  The algorithm also defines $n \times n$ matrices ${\bf A} \bigwedge {\bf B}$, ${\bf A} 
  \bar{\bigwedge} {\bf B}$, and ${\bf A} \bigvee {\bf B}$ such that
\begin{align*}
 \left({\bf A} \bigwedge {\bf B}\right)_{ij} = &  \left\{\begin{array}{ll}
 a_{ij} & \mbox{ if } b_{ij} < 0\\
 +\infty & \mbox{ otherwise}\\
\end{array}\right.
\\
\left({\bf A} \bar{\bigwedge} {\bf B}\right)_{ij} = &  \left\{\begin{array}{ll}
 a_{ij} & \mbox{ if } b_{ij} \geq 0\\
 +\infty & \mbox{ otherwise}\\
\end{array}\right.
\\
\left({\bf A} \bigvee {\bf B}\right)_{ij} = & \left\{\begin{array}{ll}
 a_{ij} & \mbox{ if } a_{ij} \neq +\infty\\
 b_{ij} & \mbox{ if } a_{ij} = +\infty, b_{ij} \neq +\infty\\
 +\infty & \mbox{ if } a_{ij} = b_{ij} = +\infty\\
\end{array}\right..
\end{align*}
 
Finally, if ${\bf C} = (c_{ij})$ and ${\bf P} = (p_{ij})$ are matrices, the algorithm 
defines the Boolean matrices $({\bf C} \geq 0)$ and $(0 \leq {\bf P} \leq M)$ 
such that 

\[\left({\bf C} \geq 0\right)_{ij} = \left\{\begin{array}{ll}
 1 & \mbox{ if } c_{ij} \geq 0\\
 0 & \mbox{ otherwise}\\
\end{array}\right.\]

\[\left(0 \leq {\bf P} \leq M\right)_{ij} = \left\{\begin{array}{ll}
 1 & \mbox{ if } 0 \leq p_{ij} \leq M\\
 0 & \mbox{ otherwise}\\
\end{array}\right.\]

  Using the definitions above, the Shoshan-Zwick algorithm computes the shortest 
  paths by  finding the $\lceil \log_2 n\rceil$ most significant 
  bits of each distance and the remainder that each distance leaves 
  modulo $M$.  This provides enough information to reconstruct the distances.  The 
  procedure is shown in Algorithm~\ref{alg:sz}, and the proof of correctness is given 
  in \cite[Lemma 3.6]{Shosh1}.
 
The algorithm computes $O(\log n + \log M) = 
  O(\log (M\cdot n))$ distance products of matrices, whose finite elements are in the range 
  $\{0, 1, 2, \dots, 2\cdot M\}$ or in the range $\{-M, \dots, M\}$.  Note that each 
  distance product can be reduced to a constant number of distance products of matrices
  with elements in the range $\{1, 2, \dots, M\}$.  All other operations in the 
  algorithm take $O(n^{2}\cdot (\log n + \log M))$ time.  Therefore, the total running 
  time of the algorithm is $\tilde{O}(M\cdot n^{\omega})$ time \cite[Theorem 3.7]{Shosh1}.
 
 \begin{algorithm}[ht!]
  \FUNCTION{ {\sc Shoshan-Zwick-APSP}$({\bf D})$}
  \begin{algorithmic}[1]
    \STATE{$l = \lceil \log_{2} n\rceil$.}
    \STATE{$m = \log_{2} M$.}
    \FOR{$(k = 1$ to $m+1)$}
      \STATE{${\bf D} = clip({\bf D} \star {\bf D}, 0, 2\cdot M)$.}
    \ENDFOR
    \STATE{${\bf A}_0 = {\bf D} - M$.}
    \FOR{$(k = 1$ to $l)$}
      \STATE{${\bf A}_k = clip({\bf A}_{k-1} \star {\bf A}_{k-1}, -M, M)$.}
    \ENDFOR
    \STATE{${\bf C}_l = -M$.}
    \STATE{${\bf P}_l = clip({\bf D}, 0, M)$.}
    \STATE{${\bf Q}_l = +\infty$.}
    \FOR{$(k = l-1$ down to $0)$}
      \STATE{${\bf C}_k = [clip({\bf P}_{k+1} \star {\bf A}_k, -M, M) \bigwedge {\bf C}_{k+1}] 
	\bigvee [clip({\bf Q}_{k+1} \star {\bf A}_k, -M, M)$ $\bar{\bigwedge} {\bf C}_{k+1}]$.}
      \STATE{${\bf P}_k = {\bf P}_{k+1} \bigvee {\bf Q}_{k+1}$.}
      \STATE{${\bf Q}_k =chop({\bf C}_k, 1-M, M)$.}
    \ENDFOR
    \FOR{$(k = 1$ to $l)$}
      \STATE{${\bf B}_{k} = ({\bf C}_k \geq 0)$.}
    \ENDFOR
    \STATE{${\bf B}_0 = (0 \leq {\bf P}_0 < M)$.}
    \STATE{${\bf R} = {\bf P}_0 \mod M$.}
    \STATE{$\mathrm{\Delta} = M\cdot \sum_{k=0}^{l} 2^{k}\cdot {\bf B}_k + {\bf R}$.}
    \RETURN $\mathrm{\Delta}$.
  \end{algorithmic}
  \caption{Shoshan-Zwick APSP Algorithm}
  \label{alg:sz}
\end{algorithm}

  \section{A Counter-Example}
  \label{counter}
  
  In this section, we provide a detailed presentation of the counter-example presented that shows that the Shoshan-Zwick algorithm is incorrect as presented in \cite{Shosh1}. 
Consider the APSP problem with respect to graph $G'$ presented in Figure~\ref{fig:example1}. 

    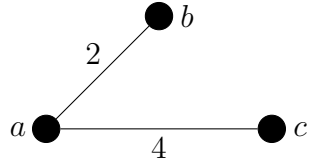
\begin{figure}[htbp]
 \begin{center}
  \begin{tikzpicture}[scale=0.5]
    \node[fill=black,circle,draw] at (-3,0) (a) {};
    \node[fill=black,circle,draw] at (0,3) (b) {};
    \node[fill=black,circle,draw] at (3,0) (c) {};
    
    \node at (-3.75,0) {{\large $a$}};
    \node at (0.75,3) {{\large $b$}};
    \node at (3.75,0) {{\large $c$}};
    
    \draw (a)--(b);
    \draw (a)--(c);
    
    \node at (0,-0.5) {{\large $4$}};
    \node at (-1.75,2) {{\large $2$}};
  \end{tikzpicture}
 \end{center}
\caption{Graph $G'$ of the counter-example for the Shoshan-Zwick algorithm.}
\label{fig:example1}       
\end{figure}

Graph $G'$ can also be represented as a $3 \times 3$ matrix:
  \[{\bf D} = 
   \left[\begin{array}{ccc}
   0 & 2 & 4\\
   2 & 0 & \infty\\
   4 & \infty & 0\\
   \end{array}\right].
  \]
  We now walk through each step of the algorithm.  First, $l = \lceil \log_2 3\rceil = 2$, and 
  $m = \log_2 4 = 2$.  This means that in the {\bf for} loop in lines $3$ to $5$, we set 
  ${\bf D} = clip({\bf D}\star{\bf D}, 0, 8)$ three times.  At each such step, we 
  get 
  {\scriptsize
\begin{align*}
{\bf D} \star {\bf D} & = 
  \left[\begin{array}{ccc}
   0 & 2 & 4\\
   2 & 0 & \infty\\
   4 & \infty & 0\\
  \end{array}\right] \star 
  \left[\begin{array}{ccc}
   0 & 2 & 4\\
   2 & 0 & \infty\\
   4 & \infty & 0\\
  \end{array}\right] = 
  \left[\begin{array}{rrr}
   0 & 2 & 4\\
   2 & 0 & 6\\
   4 & 6 & 0\\
  \end{array}\right],
  & {\bf D} = 
  \left[\begin{array}{rrr}
  0 & 2 & 4\\
   2 & 0 & 6\\
   4 & 6 & 0\\
  \end{array}\right] (\text{for }k=1),
  \\
    {\bf D} \star {\bf D} & = 
  \left[\begin{array}{rrr}
  0 & 2 & 4\\
  2 & 0 & 6\\
  4 & 6 & 0\\
  \end{array}\right] \star 
  \left[\begin{array}{rrr}
  0 & 2 & 4\\
  2 & 0 & 6\\
  4 & 6 & 0\\
  \end{array}\right] = 
  \left[\begin{array}{rrr}
  0 & 2 & 4\\
  2 & 0 & 6\\
  4 & 6 & 0\\
  \end{array}\right],
   &  {\bf D} = 
  \left[\begin{array}{rrr}
  0 & 2 & 4\\
   2 & 0 & 6\\
   4 & 6 & 0\\
  \end{array}\right] (\text{for }k=2),
  \\
    {\bf D} \star {\bf D} & = 
  \left[\begin{array}{rrr}
  0 & 2 & 4\\
  2 & 0 & 6\\
  4 & 6 & 0\\
  \end{array}\right] \star 
  \left[\begin{array}{rrr}
  0 & 2 & 4\\
  2 & 0 & 6\\
  4 & 6 & 0\\
  \end{array}\right] = 
  \left[\begin{array}{rrr}
  0 & 2 & 4\\
  2 & 0 & 6\\
  4 & 6 & 0\\
  \end{array}\right],
   &  {\bf D} = 
  \left[\begin{array}{rrr}
  0 & 2 & 4\\
   2 & 0 & 6\\
   4 & 6 & 0\\
  \end{array}\right] (\text{for } k=3).  
\end{align*}  
  }
  The next step is to set ${\bf A}_0 = {\bf D} - {\bf M}$, which gives us 
  \[{\bf A}_0 = 
  \left[\begin{array}{rrr}
  0 & 2 & 4\\
  2 & 0 & 6\\
  4 & 6 & 0\\
  \end{array}\right] - 4 = 
  \left[\begin{array}{rrr}
  -4 & -2 & 0\\
  -2 & -4 & 2\\
  0 & 2 & -4\\
  \end{array}\right].\]
  
  In the {\bf for} loop in lines $7$ to $9$, we compute ${\bf A}_k = clip({\bf A}_{k-1} \star {\bf A}_{k-1}, -4, 4)$ 
  for $k=1$ and $k=2$. This gives us 
  {\scriptsize
  \begin{align*}
  {\bf A}_0 \star {\bf A}_0 = 
  \left[\begin{array}{rrr}
  -4 & -2 & 0\\
  -2 & -4 & 2\\
  0 & 2 & -4\\
  \end{array}\right] \star 
  \left[\begin{array}{rrr}
  -4 & -2 & 0\\
  -2 & -4 & 2\\
  0 & 2 & -4\\
  \end{array}\right] = 
  \left[\begin{array}{rrr}
  -8 & -6 & -4\\
  -6 & -8 & -2\\
  -4 & -2 & -8\\
  \end{array}\right],
  & \; {\bf A}_1 = 
  \left[\begin{array}{rrr}
  -4 & -4 & -4\\
  -4 & -4 & -2\\
  -4 & -2 & -4\\
  \end{array}\right],
\\
    {\bf A}_1 \star {\bf A}_1 = 
  \left[\begin{array}{rrr}
  -4 & -4 & -4\\
  -4 & -4 & -2\\
  -4 & -2 & -4\\
  \end{array}\right] \star 
  \left[\begin{array}{rrr}
  -4 & -4 & -4\\
  -4 & -4 & -2\\
  -4 & -2 & -4\\
  \end{array}\right] = 
  \left[\begin{array}{rrr}
  -8 & -8 & -8\\
  -8 & -8 & -8\\
  -8 & -8 & -8\\
  \end{array}\right],
  & \; {\bf A}_2 = 
  \left[\begin{array}{rrr}
  -4 & -4 & -4\\
  -4 & -4 & -4\\
  -4 & -4 & -4\\
  \end{array}\right].  
  \end{align*}  
  }
  
  In lines $10$ to $12$, we set ${\bf C}_2 = -4$, ${\bf P}_2 = clip({\bf D}, 0, 4)$, 
  and ${\bf Q}_2 = +\infty$. Thus, we have
  \[{\bf C}_2 = 
  \left[\begin{array}{rrr}
  -4 & -4 & -4\\
  -4 & -4 & -4\\
  -4 & -4 & -4\\
  \end{array}\right], 
  {\bf P}_2 = 
  \left[\begin{array}{ccc}
  0 & 2 & 4\\
  2 & 0 & \infty\\
  4 & \infty & 0\\
  \end{array}\right],  
  {\bf Q}_2 = 
  \left[\begin{array}{ccc}
  \infty & \infty & \infty\\
  \infty & \infty & \infty\\
  \infty & \infty & \infty\\
  \end{array}\right].\]
  
  We now compute the {\bf for} loop in lines $13$ to $17$.  This means we run the lines 
  \[{\bf C}_k = [clip({\bf P}_{k+1} \star {\bf A}_k, -4, 4) \bigwedge {\bf C}_{k+1}] 
	\bigvee [clip({\bf Q}_{k+1} \star {\bf A}_k, -4, 4) \bar{\bigwedge} {\bf C}_{k+1}],\] 
  \[{\bf P}_k = {\bf P}_{k+1} \bigvee {\bf Q}_{k+1}, \mbox{ and}\] \[{\bf Q}_k =chop({\bf C}_k, -3, 4)\] twice,
  i.e., for $k=1$ and $k=0$.  
  After this loop, we get 
 {\scriptsize
  \[(\text{for }k=1) \;\;\; {\bf P}_2 \star {\bf A}_1 = 
  \left[\begin{array}{ccc}
  0 & 2 & 4\\
  2 & 0 & \infty\\
  4 & \infty & 0\\
  \end{array}\right] \star 
  \left[\begin{array}{rrr}
  -4 & -4 & -4\\
  -4 & -4 & -2\\
  -4 & -2 & -4\\
  \end{array}\right] = 
  \left[\begin{array}{rrr}
  -4 & -4 & -4\\
  -4 & -4 & -2\\
  -4 & -2 & -4\\
  \end{array}\right],
  \]
  \[
  clip({\bf P}_2 \star {\bf A}_1, -4, 4) = 
  \left[\begin{array}{rrr}
  -4 & -4 & -4\\
  -4 & -4 & -2\\
  -4 & -2 & -4\\
  \end{array}\right],
  \]
    \[
  clip({\bf P}_2 \star {\bf A}_1, -4, 4) \bigwedge {\bf C}_{2} = 
  \left[\begin{array}{rrr}
  -4 & -4 & -4\\
  -4 & -4 & -2\\
  -4 & -2 & -4\\
  \end{array}\right] 
  \bigwedge
    \left[\begin{array}{rrr}
  -4 & -4 & -4\\
  -4 & -4 & -4\\
  -4 & -4 & -4\\
  \end{array}\right] =
  \left[\begin{array}{rrr}
  -4 & -4 & -4\\
  -4 & -4 & -2\\
  -4 & -2 & -4\\
  \end{array}\right],
  \]
  \[{\bf Q}_2 \star {\bf A}_1 = 
  \left[\begin{array}{ccc}
  \infty & \infty & \infty\\
  \infty & \infty & \infty\\
  \infty & \infty & \infty\\
  \end{array}\right] \star 
  \left[\begin{array}{rrr}
  -4 & -4 & -4\\
  -4 & -4 & -2\\
  -4 & -2 & -4\\
  \end{array}\right] = 
  \left[\begin{array}{rrr}
  \infty & \infty & \infty\\
  \infty & \infty & \infty\\
  \infty & \infty & \infty\\
  \end{array}\right],
  \]
   \[
  clip({\bf Q}_2 \star {\bf A}_1, -4, 4) = 
  \left[\begin{array}{rrr}
  \infty & \infty & \infty\\
  \infty & \infty & \infty\\
  \infty & \infty & \infty\\
  \end{array}\right],
  \]
    \[
  clip({\bf Q}_2 \star {\bf A}_1, -4, 4) \bar{\bigwedge} {\bf C}_{2} = 
  \left[\begin{array}{rrr}
  \infty & \infty & \infty\\
  \infty & \infty & \infty\\
  \infty & \infty & \infty\\
  \end{array}\right] 
  \bar{\bigwedge}
    \left[\begin{array}{rrr}
  -4 & -4 & -4\\
  -4 & -4 & -4\\
  -4 & -4 & -4\\
  \end{array}\right] =
  \left[\begin{array}{rrr}
  \infty & \infty & \infty\\
  \infty & \infty & \infty\\
  \infty & \infty & \infty\\
  \end{array}\right],
  \]
  \[{\bf C}_1 = 
    \left[\begin{array}{rrr}
  -4 & -4 & -4\\
  -4 & -4 & -2\\
  -4 & -2 & -4\\
  \end{array}\right]
  \bigvee
    \left[\begin{array}{rrr}
  \infty & \infty & \infty\\
  \infty & \infty & \infty\\
  \infty & \infty & \infty\\
  \end{array}\right] =
  \left[\begin{array}{rrr}
  -4 & -4 & -4\\
  -4 & -4 & -2\\
  -4 & -2 & -4\\
  \end{array}\right], 
  \]
    \[{\bf P}_1 = {\bf P}_2 \bigvee {\bf Q}_2 = 
  \left[\begin{array}{ccc}
  0 & 2 & 4\\
  2 & 0 & \infty\\
  4 & \infty & 0\\
  \end{array}\right]
  \bigvee
    \left[\begin{array}{rrr}
  \infty & \infty & \infty\\
  \infty & \infty & \infty\\
  \infty & \infty & \infty\\
  \end{array}\right] =
  \left[\begin{array}{ccc}
  0 & 2 & 4\\
  2 & 0 & \infty\\
  4 & \infty & 0\\
  \end{array}\right], 
  \]
  \[
  {\bf Q}_1 = chop({\bf C}_1, -3, 4) = 
  \left[\begin{array}{rrr}
  \infty & \infty & \infty\\
  \infty & \infty & -2\\
  \infty & -2 & \infty\\
  \end{array}\right],\]
    \[(\text{and for }k=0) \;\;\; {\bf P}_1 \star {\bf A}_0 = 
  \left[\begin{array}{ccc}
  0 & 2 & 4\\
  2 & 0 & \infty\\
  4 & \infty & 0\\
  \end{array}\right] \star 
  \left[\begin{array}{rrr}
  -4 & -2 & 0\\
  -2 & -4 & 2\\
  0 & 2 & -4\\
  \end{array}\right] = 
  \left[\begin{array}{rrr}
  -4 & -2 & 0\\
  -2 & -4 & 2\\
  0 & 2 & -4\\
  \end{array}\right],
  \]
  \[
  clip({\bf P}_1 \star {\bf A}_0, -4, 4) = 
  \left[\begin{array}{rrr}
  -4 & -2 & 0\\
  -2 & -4 & 2\\
  0 & 2 & -4\\
  \end{array}\right],
  \]
    \[
  clip({\bf P}_1 \star {\bf A}_0, -4, 4) \bigwedge {\bf C}_{1} = 
  \left[\begin{array}{rrr}
  -4 & -2 & 0\\
  -2 & -4 & 2\\
  0 & 2 & -4\\
  \end{array}\right] 
  \bigwedge
    \left[\begin{array}{rrr}
  -4 & -4 & -4\\
  -4 & -4 & -2\\
  -4 & -2 & -4\\
  \end{array}\right] =
  \left[\begin{array}{rrr}
  -4 & -2 & 0\\
  -2 & -4 & 2\\
  0 & 2 & -4\\
  \end{array}\right],
  \]
  \[{\bf Q}_1 \star {\bf A}_0 = 
  \left[\begin{array}{ccc}
  \infty & \infty & \infty\\
  \infty & \infty & -2\\
  \infty & -2 & \infty\\
  \end{array}\right] \star 
  \left[\begin{array}{rrr}
  -4 & -2 & 0\\
  -2 & -4 & 2\\
  0 & 2 & -4\\
  \end{array}\right] = 
  \left[\begin{array}{rrr}
  \infty & \infty & \infty\\
  -2 & 0 & -6\\
  -4 & -6 & -4\\
  \end{array}\right],
  \]
   \[
  clip({\bf Q}_1 \star {\bf A}_0, -4, 4) = 
  \left[\begin{array}{rrr}
  \infty & \infty & \infty\\
  -2 & 0 & -4\\
  -4 & -4 & -4\\
  \end{array}\right],
  \]
    \[
  clip({\bf Q}_1 \star {\bf A}_0, -4, 4) \bar{\bigwedge} {\bf C}_{1} = 
  \left[\begin{array}{rrr}
  \infty & \infty & \infty\\
  -2 & 0 & -4\\
  -4 & -4 & -4\\
  \end{array}\right] 
  \bar{\bigwedge}
    \left[\begin{array}{rrr}
  -4 & -4 & -4\\
  -4 & -4 & -2\\
  -4 & -2 & -4\\
  \end{array}\right] =
  \left[\begin{array}{rrr}
  \infty & \infty & \infty\\
  \infty & \infty & \infty\\
  \infty & \infty & \infty\\
  \end{array}\right],
  \]
  \[{\bf C}_0 = 
    \left[\begin{array}{rrr}
  -4 & -2 & 0\\
  -2 & -4 & 2\\
  0 & 2 & -4\\
  \end{array}\right]
  \bigvee
    \left[\begin{array}{rrr}
  \infty & \infty & \infty\\
  \infty & \infty & \infty\\
  \infty & \infty & \infty\\
  \end{array}\right] =
  \left[\begin{array}{rrr}
  -4 & -2 & 0\\
  -2 & -4 & 2\\
  0 & 2 & -4\\
  \end{array}\right], 
  \]
    \[{\bf P}_0 = {\bf P}_1 \bigvee {\bf Q}_1 = 
  \left[\begin{array}{ccc}
  0 & 2 & 4\\
  2 & 0 & \infty\\
  4 & \infty & 0\\
  \end{array}\right]
  \bigvee
    \left[\begin{array}{rrr}
  \infty & \infty & \infty\\
  \infty & \infty & -2\\
  \infty & -2 & \infty\\
  \end{array}\right] =
  \left[\begin{array}{rrr}
  0 & 2 & 4\\
  2 & 0 & -2\\
  4 & -2 & 0\\
  \end{array}\right], 
  \]
  \[
  {\bf Q}_0 = chop({\bf C}_0, -3, 4) = 
  \left[\begin{array}{rrr}
  \infty & -2 & \infty\\
  -2 & \infty & 2\\
  \infty & 2 & \infty\\
  \end{array}\right].\]
  }
  In the {\bf for} loop in lines $18$ to $20$, we set ${\bf B}_k = ({\bf C}_k \geq 0)$ for $k=1$ and $k=2$.  This means 
\begin{align*}
{\bf B}_1 & = ({\bf C}_1 \geq 0) = \left(
   \left[\begin{array}{rrr}
  -4 & -4 & -4\\
  -4 & -4 & -2\\
  -4 & -2 & -4\\
  \end{array}\right] \geq 0\right) = 
  \left[\begin{array}{rrr}
  0 & 0 & 0\\
  0 & 0 & 0\\
  0 & 0 & 0\\
  \end{array}\right],
\\
{\bf B}_2 & = ({\bf C}_2 \geq 0) = \left(
   \left[\begin{array}{rrr}
  -4 & -4 & -4\\
  -4 & -4 & -4\\
  -4 & -4 & -4\\
  \end{array}\right] \geq 0\right) = 
  \left[\begin{array}{rrr}
  0 & 0 & 0\\
  0 & 0 & 0\\
  0 & 0 & 0\\
  \end{array}\right].
\end{align*} 

  We then set ${\bf B}_0 = (0 \leq {\bf P}_0 < 4)$, ${\bf R} = {\bf P}_0 \mod 4$, 
  and $\mathrm{\Delta} = 4\cdot \sum_{k=0}^{2} 2^{k}\cdot {\bf B}_k + {\bf R}$ according to lines 
  $21$ to $23$:
  {\scriptsize
  \[{\bf B}_0 = (0 \leq {\bf P}_0 < 4) = \left(0 \leq 
   \left[\begin{array}{rrr}
     0 & 2 & 4\\
  2 & 0 & -2\\
  4 & -2 & 0\\
   \end{array}\right] < 4\right) = 
  \left[\begin{array}{rrr}
     1 & 1 & 0\\
  1 & 1 & 0\\
  0 & 0 & 1\\
   \end{array}\right],
  \]
  \[{\bf R} = {\bf P}_0 \mod 4 = 
   \left[\begin{array}{rrr}
     0 & 2 & 4\\
  2 & 0 & -2\\
  4 & -2 & 0\\
   \end{array}\right] \mod 4 = 
    \left[\begin{array}{rrr}
     0 & 2 & 0\\
  2 & 0 & -2\\
  0 & -2 & 0\\
   \end{array}\right],
  \]
  \[\begin{array}{rcl}
  \mathrm{\Delta} & = & 4 \cdot \displaystyle\sum_{k=0}^{2} 2^{k}\cdot {\bf B}_k + {\bf R}\\
   & = & 4\cdot \left(
   2^{0} \cdot \left[\begin{array}{rrr}
    1 & 1 & 0\\
    1 & 1 & 0\\
    0 & 0 & 1\\
   \end{array}\right] + 
   2^{1} \cdot \left[\begin{array}{rrr}
    0 & 0 & 0\\
    0 & 0 & 0\\
    0 & 0 & 0\\
   \end{array}\right] + 
   2^{2} \cdot \left[\begin{array}{rrr}
    0 & 0 & 0\\
    0 & 0 & 0\\
    0 & 0 & 0\\
   \end{array}\right] 
   \right) + 
   \left[\begin{array}{rrr}
    0 & 2 & 0\\
    2 & 0 & -2\\
    0 & -2 & 0\\
   \end{array}\right]\\
   & = & 4\cdot \left(
   \left[\begin{array}{rrr}
    1 & 1 & 0\\
    1 & 1 & 0\\
    0 & 0 & 1\\
   \end{array}\right] + 
   \left[\begin{array}{rrr}
    0 & 0 & 0\\
    0 & 0 & 0\\
    0 & 0 & 0\\
   \end{array}\right] + 
   \left[\begin{array}{rrr}
    0 & 0 & 0\\
    0 & 0 & 0\\
    0 & 0 & 0\\
   \end{array}\right] 
   \right) + 
   \left[\begin{array}{rrr}
    0 & 2 & 0\\
    2 & 0 & -2\\
    0 & -2 & 0\\
   \end{array}\right]\\
   & = & 
   \left[\begin{array}{rrr}
    4 & 4 & 0\\
    4 & 4 & 0\\
    0 & 0 & 4\\
   \end{array}\right] +  
   \left[\begin{array}{rrr}
    0 & 2 & 0\\
    2 & 0 & -2\\
    0 & -2 & 0\\
   \end{array}\right] = 
      \left[\begin{array}{rrr}
    4 & 6 & 0\\
    6 & 4 & -2\\
    0 & -2 & 4\\
   \end{array}\right]\\
   \end{array}
  \]
  }
  The algorithm terminates by returning $\mathrm{\Delta}$ in line $24$.
  The resulting shortest path costs are presented in 
  Table~\ref{table:SW_results}.  
  However, if we examine graph $G'$ (Figure~\ref{fig:example1}), we 
find that these results are \textit{incorrect}.
 The correct shortest paths are provided in Table~\ref{table:correct}.  
Therefore, the Shoshan-Zwick algorithm is incorrect.

\begin{table}[h]
\caption{Shortest paths for graph $G'$ based on the Shoshan-Zwick algorithm.}
\label{table:SW_results}
\centering
\begin{tabular}{|c|rrr|}
\hline
$\mathrm{\Delta}$ & $a$ & $b$ & $c$  \\
\hline
$a$ & $4$ & $6$ & $0$ \\
$b$ & $6$ & $4$ & $-2$ \\
$c$ & $0$ & $-2$ & $4$ \\
\hline
\end{tabular}
\end{table}

\begin{table}[h]
\caption{Correct shortest paths for graph $G'$ .}
\label{table:correct}
\centering
\begin{tabular}{|c|rrr|}
\hline
$\mathrm{\Delta}$ & $a$ & $b$ & $c$  \\
\hline
$a$ & $0$ & $2$ & $4$ \\
$b$ & $2$ & $0$ & $6$ \\
$c$ & $4$ & $6$ & $0$ \\
\hline
\end{tabular}
\end{table}

\section{The Errors in the Algorithm}
\label{error}

In this section, we describe what causes the erroneous behavior of the Shoshan-Zwick algorithm. 
Recall that $\mathrm{\Delta}$ is the matrix that contains the costs of the shortest paths between all pairs of vertices 
after the algorithm terminates.
Moreover, let $\delta(i,j)$ denote the cost of the shortest path between nodes $i$ and $j$.
After the termination of the algorithm, we must have $\mathrm{\Delta}_{ij}=\delta(i,j)$ for any $i,j\in\{1,\ldots ,n\}$. 
However, as we showed in the counter-example in Section \ref{counter},
it may be the case that $\mathrm{\Delta}_{ij}\neq\delta(i,j)$ for some $i,j$ at termination. 
The exact errors of the algorithm are as follows:

\begin{enumerate}[1.]
 \item ${\bf R}$ is not computed correctly.
 \item ${\bf B}_0$ is not computed correctly.
 \item $\mathrm{\Delta}$ is not computed correctly, since  $M\cdot {\bf B}_0 + {\bf R}$ is part of the sum producing it.
\end{enumerate}

\noindent In the rest of this section, we illustrate what causes these errors.

It is clear from line $23$ of Algorithm \ref{alg:sz} that
the matrices ${\bf B}_k$ (for $0\leq k\leq l$) represent the $\lceil \log_2 n\rceil$ most significant 
bits of each distance.  That is,
\[({\bf B}_k)_{ij} = \left\{\begin{array}{ll}
1 & \mbox{if } 2^{k}\cdot M \mbox{ must be added to } \mathrm{\Delta}_{ij} \mbox{ so that } \mathrm{\Delta}_{ij}=\delta(i,j)\\
0 & \mbox{otherwise}
\end{array}\right.,\]
while $\bf R$ represents the remainder of each distance modulo $M$. 
This is also illustrated in \cite[Lemma 3.6]{Shosh1}, where 
for every $0\leq k\leq l$, 
$({\bf B}_k)_{ij} = 1$ if and only if $\delta(i,j) \mod 2^{k+m+1}\geq 2^{k+m}$,
while ${\bf R}_{ij} = \delta(i,j) \mod M$.
Hence, for every $i,j$, we must have 
\begin{eqnarray}\label{InitialSum}
(M\cdot {\bf B}_0 + {\bf R})_{ij} = \delta(i,j) \mod 2^{m+1}.
\end{eqnarray}

The first error of the algorithm arises immediately from the key observation that ${\bf P}_0$ can 
have entries with negative values. 
This means that line $22$ (that sets ${\bf R}_{ij} = ({\bf P}_0)_{ij}  \mod M$) 
is not correctly calculating ${\bf R}_{ij} = \delta(i,j) \mod M$ 
since $\delta(i,j)\geq 0$ by definition, while $({\bf P}_0)_{ij}$ can be negative.

A closer examination of how ${\bf P}_0$ obtains its negative values reveals another error of the algorithm.
The following definitions are given in \cite[Section 3]{Shosh1}. 
Consider a set $Y\subseteq [0,M\cdot n]$. Note that $[0,M\cdot n]$ includes any value that $\delta(i,j)$ can take, since 
$n$ is the number of nodes, and $M$ is the maximum edge cost. 
Let $Y=\cup_{r=1}^p[a_r,b_r]$, where $a_r\leq b_r$, for $1\leq r\leq p$ and $b_r < a_{r+1}$, for $1\leq r < p$.
Let ${\bf \Delta}_Y$ be an $n\times n$ matrix, whose elements are in the range $\{-M,\ldots ,M\}\cup\{+\infty\}$, such that 
for every $1\leq i,j \leq n$, we have
{\small
\begin{eqnarray}\label{Y-ij}
({\bf \Delta}_Y)_{ij} = \left\{\begin{array}{ll}
 -M & \mbox{ if } a_r \leq \delta(i,j) \leq b_r-M \mbox{ for some } 1\leq r\leq p, \\     
\delta(i,j) - b_r & \mbox{ if } b_r-M < \delta(i,j) \leq b_r+M \mbox{ for some } 1\leq r\leq p,\\
 +\infty & \mbox{ otherwise.} 
\end{array}\right.
\end{eqnarray}
}

By \cite[Lemma 3.5]{Shosh1}, ${\bf P}_0 = {\bf \Delta}_{Y_{0}}$, where 
$Y_0 = \{x | (x \mod 2^{m+1}) = 0\}$. Recall that $2^m = M$.
Note that by definition of $Y_0$, when calculating ${\bf P}_0 = {\bf \Delta}_{Y_{0}}$,
it can only be the case that $a_r=b_r$.
Moreover, $b_r = 2^{m+1}\cdot (r-1)$ for $1 \leq r \leq p$, where $p$ is such that
$2^{m+1}\cdot (p-1)\leq M\cdot n < 2^{m+1}\cdot p$.
But then: 
\[\left(\cup_{r=1}^p[b_r-M,b_r+M]\right)\supset [0,M\cdot n]\]
That is, $(\cup_{r=1}^p[b_r-M,b_r+M])$ covers all possible values that $\delta(i,j)$ may take 
for any $i,j$. Hence,
{\footnotesize
\begin{eqnarray}\label{P_0-A}
({\bf P}_0)_{ij}= \left\{\begin{array}{ll}
\delta(i,j) & \mbox{for } r=1 \mbox{ (i.e., if } \delta(i,j) \leq 2^m),\\     
\delta(i,j) - b_r & \mbox{for } 2 \leq r \leq p, \mbox{ such that } b_r-2^m < \delta(i,j) \leq b_r+2^m.\\
\end{array}\right.
\end{eqnarray}
}
Let us examine the values that $({\bf P}_0)_{ij}$ takes by equation (\ref{P_0-A}):
\begin{itemize}
\item For $0\leq\delta(i,j)\leq 2^{m}$, we have $({\bf P}_0)_{ij}=\delta(i,j)\mod 2^{m+1}$.
\item For $2^m < \delta(i,j) < 2^{m}+2^m$, we have $({\bf P}_0)_{ij}=(\delta(i,j)\mod 2^{m+1}) - 2^{m+1}$.
\item For $2^{m+1} \leq\delta(i,j) \leq 2^{m+1}+2^m$, we have $({\bf P}_0)_{ij}=\delta(i,j)\mod 2^{m+1}$.
\item For $2^{m+1}+2^m < \delta(i,j) < 2^{m+2}+2^m$, we have $({\bf P}_0)_{ij}=(\delta(i,j)\mod 2^{m+1}) - 2^{m+1}$.
\item And so forth...
\end{itemize} 
More formally, equation (\ref{P_0-A}) can be rewritten as follows:
\begin{eqnarray}\label{P_0-B}
({\bf P}_0)_{ij}= \left\{\begin{array}{ll}
\delta(i,j) \mod 2^{m+1} & \mbox{if } \delta(i,j) \mod 2^{m+1} \leq 2^{m},\\
(\delta(i,j) \mod 2^{m+1}) - 2^{m+1} & \mbox{if } \delta(i,j) \mod 2^{m+1} > 2^{m}.\\     
\end{array}\right.
\end{eqnarray}
Moreover, equation (\ref{P_0-B}) implies that 
\begin{eqnarray}\label{P_0-2}
\text{for } i,j \text{ such that } \delta(i,j) \mod 2^{m+1} \leq 2^{m} \text{, we have } 0 \leq ({\bf P}_{0})_{ij} \leq M,
\end{eqnarray}
while
\begin{eqnarray}\label{P_0-1}
\text{for } i,j \text{ such that } \delta(i,j) \mod 2^{m+1} > 2^{m} \text{, we have } -M < ({\bf P}_0)_{ij} < 0.
\end{eqnarray}
Recall now that we must have $({\bf B}_0)_{ij} = 1$ if and only if $\delta(i,j) \mod 2^{m+1}\geq 2^{m}$.
However, from equations (\ref{P_0-2}) and (\ref{P_0-1}), 
this does not hold (as claimed in the proof of \cite[Lemma 3.6]{Shosh1}) for ${\bf B}_0=(0 \leq {\bf P}_0 < M)$ 
(i.e., line $21$ of Algorithm \ref{alg:sz}). Therefore, the algorithm does not compute ${\bf B}_0$ correctly.

It is clear that in the presence of these two identified errors (in calculating $\bf R$ and ${\bf B}_0$), 
the algorithm is not computing $\mathrm{\Delta}$ correctly. 

To accommodate understanding, 
let us consider the case of ${\bf P}_0$ for graph $G'$ (Figure \ref{fig:example1}) in Section \ref{counter}.
We have $Y_0\subseteq [0, M\cdot n]$, i.e., $Y_0\subseteq [0, 12]$.
Further, $Y_0 = \{x | (x \mod 2^{2+1}) = 0\}$. This means that $Y_0 = \{0,8\}$.
Thus, with respect to the definition of $({\bf \Delta}_Y)_{ij}$,
we have $a_1=b_1=0$ and $a_2=b_2=8$. Therefore,
{\small
\[({\bf P}_0)_{ij}=({\bf \Delta}_{Y_0})_{ij} = \left\{\begin{array}{ll}
 \delta(i,j) & \mbox{ if } -4 < \delta(i,j) \leq 4 \mbox{ (for } r=1 \mbox{ and hence } b_r=0),\\     
\delta(i,j) - 8 & \mbox{ if } 4 < \delta(i,j) \leq 12 \mbox{ (for } r=2 \mbox{ and hence } b_r=8).\\
\end{array}\right.
\]
}
Let us consider the shortest path cost of nodes $a$ ($i=1$) and $b$ ($j=2$). We have $\delta(1,2)=2$, and thus
$({\bf P}_0)_{12}=2$. 
We now consider the shortest path cost of nodes $b$ ($i=2$) and $c$ ($j=3$). We then have $\delta(2,3)=6$. 
This, however, means that $({\bf P}_0)_{23}=-2$. 
Although $\delta(2,3)\mod 2^3 > 2^2$, we have $({\bf B}_0)_{23} = 0$. 
Moreover, $\bf R_{23}=({\bf P}_0)_{23} \mod 8 = -2$ (instead of $\bf R_{23}=\delta(2,3) \mod 8 = 6$). 
These two errors lead to $\mathrm{\Delta}_{23}= -2$ instead of $6$ (see Table \ref{table:correct}).

  \section{The Revised Algorithm}
  \label{corr}
  
  In this section, we present a new version of the Shoshan-Zwick algorithm 
  that resolves the problems illustrated in Section~\ref{error}.  
  Lines $1$ to $20$ of Algorithm~\ref{alg:sz} remain unchanged.  
  We make the following changes to lines $21$ to $24$:
  
  \begin{enumerate}
   \item We replace ${\bf B}_{0}$ with ${\bf \hat{B}}_{0}$ and set ${\bf \hat{B}}_{0}$ to $(-M < {\bf P}_0 < 0)$ in line $21$.
   \item We replace $\bf R$ with ${\bf \hat{R}}$ and set ${\bf \hat{R}}$ to ${\bf P}_0$ in line $22$.
   \item We set $\mathrm{\Delta}$ to $M\cdot \displaystyle\sum_{k=1}^{l} 2^k \cdot {\bf B}_k + 2\cdot M\cdot {\bf \hat{B}}_0 + {\bf \hat{R}}$ in line $23$.
  \end{enumerate}
  
  Note that we have replaced ${\bf B}_{0}$ and $\bf R$ with ${\bf \hat{B}}_{0}$ and ${\bf \hat{R}}$, respectively.  
The purpose of the change in notation is to show that these matrices  no longer represent the incorrect 
versions from the original (erroneous) algorithm.
 Lines $21$ to $24$ of the revised algorithm are illustrated in Algorithm \ref{alg:sz_correct}.

   \begin{algorithm}[htbp]
   21: {\hspace{.025em}} ${\bf \hat{B}}_{0} = (-M < {\bf P}_0 < 0)$.\\
   22: {\hspace{.025em}} ${\bf \hat{R}} = {\bf P}_0$.\\
   23: {\hspace{.025em}} $\mathrm{\Delta} = M\cdot \displaystyle\sum_{k=1}^{l} 2^k \cdot {\bf B}_k + 2\cdot M\cdot {\bf \hat{B}}_0 + {\bf \hat{R}}$. \\
   24: {\hspace{.025em}} {\bf return} $\mathrm{\Delta}$.
  \caption{Corrected portion of lines $21$ to $24$.}
  \label{alg:sz_correct}
\end{algorithm}

We refer to our counter-example in Section~\ref{counter}.  
Since the algorithm is correct up to line $20$, we will examine how the algorithm operates in lines $21$ to $24$.  
Hence, we set ${\bf \hat{B}}_0 = (-4 < {\bf P}_0 < 0)$, ${\bf \hat{R}} = {\bf P}_0$, 
and $\mathrm{\Delta} = 4\cdot \sum_{k=1}^{2} 2^{k}\cdot {\bf B}_k + 8\cdot {\bf \hat{B}}_0 + {\bf \hat{R}}$.  
{\scriptsize
  \[{\bf \hat{B}}_0 = (-4 < {\bf P}_0 < 0) = \left(-4 \leq 
   \left[\begin{array}{rrr}
     0 & 2 & 4\\
  2 & 0 & -2\\
  4 & -2 & 0\\
   \end{array}\right] < 0\right) = 
  \left[\begin{array}{rrr}
     0 & 0 & 0\\
  0 & 0 & 1\\
  0 & 1 & 0\\
   \end{array}\right],
  \]
  
  \[{\bf \hat{R}} = {\bf P}_0 = 
   \left[\begin{array}{rrr}
     0 & 2 & 4\\
  2 & 0 & -2\\
  4 & -2 & 0\\
   \end{array}\right],
  \]
  
  \[\begin{array}{rcl}
  \mathrm{\Delta} & = & 4 \cdot \displaystyle\sum_{k=1}^{2} 2^{k}\cdot {\bf B}_k + 8\cdot {\bf \hat{B}}_0 + {\bf \hat{R}}\\
  
   & = & 4\cdot \left(
   2^{1} \cdot \left[\begin{array}{rrr}
  0 & 0 & 0\\
  0 & 0 & 0\\
  0 & 0 & 0\\
   \end{array}\right] + 
   2^{2}\cdot \left[\begin{array}{rrr}
  0 & 0 & 0\\
  0 & 0 & 0\\
  0 & 0 & 0\\
  \end{array}\right]
   \right) + 
   8 \cdot 
   \left[\begin{array}{rrr}
     0 & 0 & 0\\
  0 & 0 & 1\\
  0 & 1 & 0\\
   \end{array}\right] + 
      \left[\begin{array}{rrr}
     0 & 2 & 4\\
  2 & 0 & -2\\
  4 & -2 & 0\\
   \end{array}\right] \\     
  & = &\left[\begin{array}{rrr}
    0 & 2 & 4\\
    2 & 0 & 6\\
    4 & 6 & 0\\
   \end{array}\right]\\
 
   \end{array}
  \]
  }
  Line $24$ returns $\mathrm{\Delta}$. The elements of $\mathrm{\Delta}$ reflect the correct shortest path costs given in Table~\ref{table:correct}. Therefore, 
  the revised algorithm works correctly for our counter-example.  The corrections in the algorithm are not limited to the 
  counter-example, as shown in the theorem below:
  
  \begin{theorem}
The revised Shoshan-Zwick algorithm calculates all the shortest path costs in an undirected graph 
with integer edge costs in the range $\{1,\ldots , M\}$. 
\end{theorem}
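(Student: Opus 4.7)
The plan is to verify that line 23 of the revised algorithm produces $\mathrm{\Delta}_{ij}=\delta(i,j)$ for every pair $1\leq i,j\leq n$. Since lines 1--20 of Algorithm~\ref{alg:sz_correct} coincide with those of Algorithm~\ref{alg:sz}, I can invoke the portions of the original Shoshan-Zwick analysis that the errors identified in Section~\ref{error} do not affect: namely \cite[Lemma 3.5]{Shosh1}, which yields ${\bf P}_0={\bf \Delta}_{Y_0}$ for $Y_0=\{x\mid x\bmod 2^{m+1}=0\}$, and the portion of \cite[Lemma 3.6]{Shosh1} establishing that $({\bf B}_k)_{ij}=1$ if and only if $\delta(i,j)\bmod 2^{k+m+1}\geq 2^{k+m}$ for every $1\leq k\leq l$. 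Equivalently, $({\bf B}_k)_{ij}$ is precisely the bit at position $k+m$ in the binary expansion of $\delta(i,j)$.

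The first step is to record a binary decomposition of $\delta(i,j)$ in terms of these bits. Since $l=\lceil\log_2 n\rceil$ and $M=2^m$, the bound $\delta(i,j)\leq M\cdot n\leq 2^{m+l}$ ensures that every bit of $\delta(i,j)$ at a position strictly greater than $m$ is captured by some ${\bf B}_k$ with $1\leq k\leq l$. Splitting the binary expansion of $\delta(i,j)$ at position $m$ therefore yields
\[
\delta(i,j) \;=\; M\sum_{k=1}^{l} 2^{k}({\bf B}_k)_{ij} \;+\; \bigl(\delta(i,j)\bmod 2M\bigr).
\]

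The main new step---and the only place the revision actually enters---is to prove
\[
2M\cdot({\bf \hat{B}}_0)_{ij} \;+\; ({\bf \hat{R}})_{ij} \;=\; \delta(i,j)\bmod 2M.
\]
I would establish this by a two-case analysis driven by equation (\ref{P_0-B}) of Section~\ref{error}. In the case $\delta(i,j)\bmod 2M\leq M$, that equation yields $({\bf P}_0)_{ij}=\delta(i,j)\bmod 2M\in[0,M]$, so line~21 of Algorithm~\ref{alg:sz_correct} gives $({\bf \hat{B}}_0)_{ij}=0$ and line~22 gives $({\bf \hat{R}})_{ij}=({\bf P}_0)_{ij}$, matching the right-hand side. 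In the complementary case $\delta(i,j)\bmod 2M>M$, equation (\ref{P_0-B}) gives $({\bf P}_0)_{ij}=(\delta(i,j)\bmod 2M)-2M\in(-M,0)$, so $({\bf \hat{B}}_0)_{ij}=1$ and the added $2M$ precisely restores the missing shift.

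Substituting this identity into the decomposition from the second paragraph recovers exactly the expression computed in line~23 of Algorithm~\ref{alg:sz_correct}, yielding $\mathrm{\Delta}_{ij}=\delta(i,j)$. The only subtlety, and the point at which I would be most careful, is the case boundary $\delta(i,j)\bmod 2M=M$: equation (\ref{P_0-B}) assigns equality to its first branch, producing $({\bf P}_0)_{ij}=M>0$, and the strict inequalities in ${\bf \hat{B}}_0=(-M<{\bf P}_0<0)$ correctly send this value to the zero branch, so the boundary is handled consistently. Everything beyond this boundary check is routine arithmetic.
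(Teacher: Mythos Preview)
Your proposal is correct and follows essentially the same approach as the paper: both reduce to showing $(2M\cdot{\bf \hat{B}}_0+{\bf \hat{R}})_{ij}=\delta(i,j)\bmod 2^{m+1}$ and verify this by the same two-case split on equation~(\ref{P_0-B}). Your version is slightly more explicit in spelling out the high-bit decomposition $\delta(i,j)=M\sum_{k=1}^{l}2^{k}({\bf B}_k)_{ij}+(\delta(i,j)\bmod 2M)$, whereas the paper simply invokes equation~(\ref{InitialSum}) for this step, but the substance of the argument is the same.
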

\begin{proof}
It suffices to show that $2\cdot M\cdot {\bf \hat{B}}_0 + {\bf \hat{R}}$
represents what the original algorithm intended to represent with $M\cdot {\bf B}_0 + {\bf R}$.
That is, by equation (\ref{InitialSum}), it suffices to show that 
$(2\cdot M\cdot {\bf \hat{B}}_0 + {\bf \hat{R}})_{ij} = \delta(i,j) \mod 2^{m+1}$, for every 
$1\leq i,j\leq n$.

First, we consider the case where $\delta(i,j) \mod 2^{m+1} \leq 2^{m}$.
Equation (\ref{P_0-2}) indicates that $0 \leq ({\bf P}_{0})_{ij} \leq M$.
Hence, $({\bf \hat{B}}_0)_{ij}=0$ (by line $21$ of the revised algorithm).
Moreover, since ${\bf \hat{R}}_{ij}= ({\bf P}_{0})_{ij}$ (by line $22$ of the revised algorithm),
we have that ${\bf \hat{R}}_{ij}= \delta(i,j)\mod 2^{m+1}$ by equation (\ref{P_0-B}).
Thus, $(2\cdot M\cdot {\bf \hat{B}}_0 + {\bf \hat{R}})_{ij} = \delta(i,j) \mod 2^{m+1}$.

We next consider the case where $\delta(i,j) \mod 2^{m+1} > 2^{m}$. 
Equation (\ref{P_0-1}) indicates that $-M < ({\bf P}_0)_{ij} < 0$.
Hence, $({\bf \hat{B}}_0)_{ij}=1$ (by line $21$ of the revised algorithm).
Further, ${\bf \hat{R}}_{ij}= (\delta(i,j) \mod 2^{m+1}) - 2^{m+1}$ by equation (\ref{P_0-B}).
Therefore, $(2\cdot M\cdot {\bf \hat{B}}_0 + {\bf \hat{R}})_{ij} = 2\cdot 2^{m}\cdot 1 + (\delta(i,j) \mod 2^{m+1}) - 2^{m+1} = 
\delta(i,j) \mod 2^{m+1}$,
which completes the proof.
\end{proof}
 
  \section{Efficacy}
  \label{eff}
  
  In this section, we identify issues with implementing the Shoshan-Zwick algorithm.  Recall that the algorithm 
  runs in $\tilde{O}(M\cdot n^{\omega})$ time, where $\omega$ is the exponent for the fastest known matrix 
  multiplication algorithm.    This means that the running time of the algorithm depends on which matrix multiplication 
  algorithm is used. We will discuss two subcubic matrix multiplication algorithms and show why it is impractical to use them in the Shoshan-Zwick implementation.
  
  The current fastest matrix multiplication algorithm is provided by \cite{Williams12}, where 
  $\omega < 2.3727$.  This approach tightens the techniques used in \cite{CW87}, which gives a matrix multiplication 
  algorithm where $\omega < 2.376$.  Although both matrix multiplication algorithms are theoretically 
  efficient, neither one is practical to implement.  They both provide an advantage only for matrices that are too 
  large for modern hardware to handle \cite{Robinson05}.

 We next consider Strassen's matrix multiplication algorithm \cite{Strassen69}, which runs in $O(n^{2.8074})$ 
time.  Recall that the algorithm computes ${\bf A} \cdot {\bf B} = {\bf C}$ by partitioning the matrices into equally 
sized block matrices
\[{\bf A} = \left[\begin{array}{rr}
{\bf A}_{1,1} & {\bf A}_{1,2}\\
{\bf A}_{2,1} & {\bf A}_{2,2}\\
\end{array}\right],
{\bf B} = \left[\begin{array}{rr}
{\bf B}_{1,1} & {\bf B}_{1,2}\\
{\bf B}_{2,1} & {\bf B}_{2,2}\\
\end{array}\right],
{\bf C} = \left[\begin{array}{rr}
{\bf C}_{1,1} & {\bf C}_{1,2}\\
{\bf C}_{2,1} & {\bf C}_{2,2}\\
\end{array}\right]\]

where 

\[
\begin{array}{rcl}
{\bf C}_{1,1} & = & {\bf A}_{1,1}\cdot {\bf B}_{1,1} + {\bf A}_{1,2}\cdot {\bf B}_{2,1}\\
{\bf C}_{1,2} & = & {\bf A}_{1,1}\cdot {\bf B}_{1,2} + {\bf A}_{1,2}\cdot {\bf B}_{2,2}\\
{\bf C}_{2,1} & = & {\bf A}_{2,1}\cdot {\bf B}_{1,1} + {\bf A}_{2,2}\cdot {\bf B}_{2,1}\\
{\bf C}_{2,2} & = & {\bf A}_{2,1}\cdot {\bf B}_{1,2} + {\bf A}_{2,2}\cdot {\bf B}_{2,2}\\
\end{array}
\]

To reduce the total number of multiplications, seven new matrices are defined as follows:

\[
\begin{array}{rcl}
{\bf M}_{1} & = & ({\bf A}_{1,1} + {\bf A}_{2,2})\cdot ({\bf B}_{1,1} + {\bf B}_{2,2})\\
{\bf M}_{2} & = & ({\bf A}_{2,1} + {\bf A}_{2,2})\cdot {\bf B}_{1,1}\\
{\bf M}_{3} & = & {\bf A}_{1,1} \cdot ({\bf B}_{1,2} - {\bf B}_{2,2})\\
{\bf M}_{4} & = & {\bf A}_{2,2} \cdot ({\bf B}_{2,1} - {\bf B}_{1,1})\\
{\bf M}_{5} & = & ({\bf A}_{1,1} + {\bf A}_{1,2})\cdot {\bf B}_{2,2}\\
{\bf M}_{6} & = & ({\bf A}_{2,1} - {\bf A}_{1,1})\cdot ({\bf B}_{1,1} + {\bf B}_{1,2})\\
{\bf M}_{7} & = & ({\bf A}_{1,2} - {\bf A}_{2,2})\cdot ({\bf B}_{2,1} + {\bf B}_{2,2})\\
\end{array}
\]

With these new matrices, the block matrices of ${\bf C}$ can be redefined as 
\[
\begin{array}{rcl}
{\bf C}_{1,1} & = & {\bf M}_1 + {\bf M}_4 - {\bf M}_5 + {\bf M}_7\\
{\bf C}_{1,2} & = & {\bf M}_3 + {\bf M}_5\\
{\bf C}_{2,1} & = & {\bf M}_2 + {\bf M}_4\\
{\bf C}_{2,2} & = & {\bf M}_1 - {\bf M}_2 + {\bf M}_3 + {\bf M}_6\\
\end{array}
\]

The process of dividing ${\bf C}$ repeats recursively $n$ times until the submatrices degenerate into a single number.

Although Strassen's algorithm is faster than the naive $O(n^3)$ matrix multiplication algorithm, we cannot directly use it 
in the Shoshan-Zwick algorithm.  Recall that the naive approach uses matrix multiplication over the closed semi-ring $\{+, \cdot\}$.  The 
Shoshan-Zwick algorithm, on the other hand, actually uses matrix multiplication 
over the closed semi-ring $\{\min,+\}$, which is known as ``funny matrix multiplication'' or the ``distance product'' 
in the literature.  Note that the sum operation in the naive approach is equivalent to the $\min$ operation in ``funny matrix 
multiplication''.  However, Strassen's algorithm requires an additive inverse. This implies that an inverse for the $\min$ operation is 
needed in ``funny matrix multiplication''. Such an inverse does not exist.  In fact, we cannot multiply matrices with 
less than $\Omega(n^3)$ operations when only the $\min$ and sum operations are allowed \cite{AGM97,kerr70}.  
Thus, Strassen's algorithm cannot directly be used for 
computing shortest paths.

One potential solution is to encode a matrix used for distance products such that regular matrix multiplication works.  
\cite{AGM97} provides an approach for this conversion as follows:  Suppose we want to convert an $n \times n$ matrix 
${\bf A}$ to ${\bf A}'$.  We set \[a'_{ij} = (n+1)^{a_{ij}-x},\] where $x$ is the smallest value in ${\bf A}$.  
We perform the same conversion from ${\bf B}$ to ${\bf B}'$.  We then obtain ${\bf C}' = {\bf A}'\cdot{\bf B}'$ by:
\[c'_{ij} = \displaystyle\sum_{k=1}^{n} (n+1)^{a_{ik}+b_{kj}-2\cdot x}.\]  We then use binary search to find the 
largest $s_{ij}$ such that $s_{ij} \leq a'_{ik}+b'_{kj}-2\cdot x$, and we set $c_{ij} = s_{ij} + 2\cdot x$.  This 
gives us ${\bf C}$, which is the distance product of matrices ${\bf A}$ and ${\bf B}$.

\cite{AGM97} states that the algorithm performs $O(n^{\omega})$ operations on integers which are $\leq n\cdot (n+1)^{2\cdot M}$, 
where $M$ is the magnitude of the largest number.  For large numbers, we would need $O(M\cdot \log M)$ operations on 
$O(\log n)$-bit numbers.  As a result, the total time to compute the distance product is $O(M\cdot n^{\omega}\cdot \log M)$.  If we 
apply this to the Shoshan-Zwick algorithm, the algorithm takes $\tilde{O}(M^{2}\cdot n^{\omega}\cdot \log M)$ time.

Although the above algorithm provides a subcubic approach for implementing the Shoshan-Zwick algorithm, it is not necessarily 
the most efficient implementation.  This is because there exist other efficient APSP algorithms for integer edge costs.  
For instance, we can implement 
Goldberg's $O(m+n\cdot \log N)$ time single source shortest path algorithm \cite{Gol01b}, where $N$ is the largest 
edge cost, and run it $n$ times; one for each vertex.  Goldberg's implementation is one of the currently known 
fastest implementations available.  The resulting implementation is substantially faster compared 
to the Shoshan-Zwick algorithm.

\section{Conclusion}
\label{conc}

In this paper, we revised the Shoshan-Zwick algorithm to resolve
issues related to its correctness.  We first 
provided a counter-example which shows that the algorithm is incorrect.  We then identified the exact cause of the 
problem and presented a modified algorithm that resolves the problem.  We also explained the efficacy issues 
of the algorithm. 
An interesting study would be to implement the Shoshan-Zwick algorithm and profile it with efficient APSP algorithms for 
graphs with integer edge costs.  

\bibliographystyle{model1-num-names}

\end{document}